\definecolor{webblue}{rgb}{0,0,0.6}
\newcommand*\ExpandableInput[1]{\@@input#1}
\newtheorem{theorem}{Theorem}
\newtheorem{lemma}[theorem]{Lemma}
\newtheorem{proposition}[theorem]{Proposition}
\newtheorem{assumption}{Assumption}
\tikzset{>=stealth', help lines/.style={dashed, thick}, axis/.style={<->}, important line/.style={thick}, connection/.style={thick, dotted},
}
\title{\Large{\textbf{
Is Generative AI an Existential Threat to Human Creatives? Insights from Financial Economics
}}%
\footnote{I thank Zhiguo He, Richard Talley, Baozhong Yang, Liyan Yang, and Ziyu Yao for helpful discussions.}
}
\author{
Jiasun Li\footnote{\href{mailto:jli29@gmu.edu}{\tt jli29@gmu.edu}. George Mason University, 4400 University Drive, MSN 1B8, Fairfax, VA 22030, USA.}
\\
Work-in-progress \\
(Comments welcomed)
}
\date{April 7, 2024}
\begin{document}
\maketitle

\begin{abstract}
With the phenomenal rise of generative AI models (e.g., large language models such as GPT or large image models such as Diffusion), there are increasing concerns about human creatives' futures. Specifically, as generative models' power further increases, will they eventually replace all human creatives' jobs? We argue that the answer is ``no," even if existing generative AI models' capabilities reach their theoretical limit. Our theory has a close analogy to a familiar insight in financial economics on the impossibility of an informationally efficient market [\cite{grossmanImpossibilityInformationallyEfficient1980}]: 
If generative AI models can provide all the content humans need at low variable costs, then there is no incentive for humans to spend costly resources on content creation as they cannot profit from it. But if no human creates new content, then generative AI can only learn from stale information and be unable to generate up-to-date content that reflects new happenings in the physical world. This creates a paradox.
\end{abstract}
\vfill
\textbf{Keywords:} Generative AI, Information Aggregation, Large Language Models, Market Efficiency, Passive vs Active Investment. 

\newpage
\doublespacing


\section{Introduction}

The past year has been drawing many people's attention to generative AI applications such as conversational AI bot ChatGPT, as well as many other transformer-based Large Language Model (LLM) applications, such as Google's \href{https://gemini.google.com/app}{Gemini} (formerly known as Bard), Meta's \href{https://llama.meta.com/}{Llama 2}, 
diffusion-based text-to-image solutions like \href{https://stability.ai/}{Stable Diffusion} or \href{https://openai.com/dall-e-3}{DALL$\cdot$E 3}, and transformer/diffusion-based video generation applications such as OpenAI's \href{https://openai.com/sora}{Sora}. All these generative applications allow a user to give a simple instruction (``\textit{prompt}"), and the AI would then quickly generate a coherent textual answer, graphical image, or video clip (``\textit{completion}"). Although the pre-training process of such large models may be prohibitively costly, once trained the models can generate new content in response to prompts with negligible costs and at a significantly faster speed than human content creators. 

The power of generative AI models has thus been argued to impose an existential threat to human content creators (including journalists, graphic designers, film directors, or even academic researchers).%
\footnote{For example, in response to Sora's release, Elon Musk (likely jokingly) \href{https://twitter.com/elonmusk/status/1758375648605745411}{tweeted} ``gg humans."}
Will generative AI models take over creative jobs, just like how automation and machines take over manual jobs (e.g., \cite{zhang2019labor})? Futuristically, as generative AI models further improve and reach their theoretical performance limit, will they completely replace human content creators? 

In this paper, we address this very question of whether generative AI models could completely replace human content creators. We give a resounding ``no" answer based on a simple intuition borrowed from a familiar insight in financial economics: Specifically, recall \cite{grossmanImpossibilityInformationallyEfficient1980} has articulated that a strong-form efficient market as defined in \cite{fama1976foundations} is impossible to obtain, and we argue that a close analogy exists between the impossibility of a fully informationally-efficient market and a fully generative AI dominated creatives market. The analogy is so close that even ChatGPT can ``see" it, as to be demonstrated in the following paragraph. 

To (1) succinctly summarize our idea, (2) illustrate the power of generative AI models like GPT, and (3) demonstrate how natural our proposed analogy is, we hereby present an ``AI-generated" explanation instead of using our own words:

\begin{tcolorbox}[width=\textwidth,boxsep=2pt,left=0pt,right=5pt,top=5pt,bottom=5pt]
\textbf{Prompt:} 
Does the insight in Grossman and Stiglitz (1980) AER apply to the relationship between human content creators and generative AI models like GPT?
\end{tcolorbox}

\begin{quote}
\textbf{Completion:}%
\footnote{Note that ChatGPT does not necessarily produce the truth -- therefore the reasoning here is only suggestive and will be formalized and proved in the rest of our paper. Also, note that ChatGPT is not entirely deterministic -- completions may vary slightly from time to time even for the same prompt, and there is no obvious way to set random seeds. That said, based on our experiments the variations across trials are non-material to the completion's contents.}
\textit{Grossman and Stiglitz's seminal paper in 1980, titled "On the Impossibility of Informationally Efficient Markets," puts forth the idea that perfectly efficient markets are an impossibility if the cost of information gathering is not zero. Their reasoning is simple but profound: if prices fully reflect all available information, then there's no incentive for traders to spend resources to acquire information since they cannot profit from it. But if no one acquires information, prices can't be informationally efficient. This creates a paradox.}

\textit{Drawing an analogy to the context of human content creators and generative AI models like GPT:}

\begin{enumerate} 
    \item \textit{\textbf{Information Acquisition Cost:} In content creation, the effort and cost involved in humans researching, understanding, and then creating content can be likened to the cost of information acquisition in the Grossman-Stiglitz framework. GPT-like models can generate content at a fraction of the time and potentially cost, making them akin to a reduced 'information acquisition cost' mechanism in this analogy.}

    \item \textit{\textbf{Efficiency \& Incentives:} If AI models can generate vast amounts of high-quality content instantaneously, the "market" (here, the content landscape) could be saturated with this content, making it challenging for human creators to compete or find a niche. The efficiency of AI in content generation might reduce the incentive for human creators to produce content, similar to how perfect informational efficiency reduces the incentive for traders to acquire information.}

    \item \textit{\textbf{Value of Originality \& Authenticity:} Just as there's value in the original research and information gathering in financial markets (which gives an edge to traders), there's inherent value in the originality, creativity, and authenticity of human content creators. This value might not be entirely replicable by AI models, no matter how advanced.}

    \item \textit{\textbf{Paradoxical Equilibrium:} Extending the Grossman-Stiglitz insight, if everyone relies solely on AI for content, there might be a homogenization of content, reducing diversity, depth, and potentially quality. This could open spaces for human creators to bring in unique perspectives, just as there will always be a space for traders who do the hard research work in financial markets.}

    \item \textit{\textbf{Continuous Feedback Loop:} An interesting dimension is the feedback loop where human creators could use AI models to assist in their content creation process, just as traders use algorithms and models to assist in trading but don't solely rely on them.}
\end{enumerate}
    
\textit{In essence, while the Grossman-Stiglitz (1980) paper specifically addressed financial markets, its insights about the balance between efficiency, incentives, and the cost of acquiring information can be drawn upon analogously when thinking about the evolving dynamics between human content creators and advanced generative AI models.}
\end{quote}

As ChatGPT has clearly articulated (based on parroting our existing knowledge of the \cite{grossmanImpossibilityInformationallyEfficient1980} model and the inner workings of generative AI models), the interaction between generative AI and human creatives are similar to those between active and passive investors in the financial markets. Therefore, while the rise of generative AI models may take over some human content creation jobs, they will eventually reach an interior equilibrium in which generative AI and human creatives coexist -- just like how active and passive investors coexist in today's financial market. 

As far as we know, this paper is the first attempt to theoretically analyze the long-term societal impact of generative AI using insights from financial economics. Besides the paper's own implications, we also hope to demonstrate the power of classic ideas of financial economics for deepening our understanding of the rising technologies in AI. While the rise of generative AI may be a new phenomenon, the power and relevance of profound financial economics insights do not turn old.

\paragraph{Our contributions}

Our contributions is two-fold. 
First, it analytically answers a question of significant contemporaneous community interest and utmost long-term societal importance. 
Second, we again demonstrate the importance of interdisciplinary thinking for approaching generative AI topics, and especially the relevance of decades-old insights from financial economics for the emerging area.

\paragraph{Related literature}

Straddling two distinct areas of generative AI and financial economics, this paper contributes to multiple strands of the literature. 

First, our results explore the societal impacts of generative AI models, which are typically developed in the machine-learning community. Notable contributions to the recent success of generative AI models include ones on transformer architecture such as \cite{vaswani2017attention} and \cite{scao2022bloom}, pre-training and scaling laws such as \cite{kaplan2020scaling}, model architectures and pre-training objectives such as \cite{wang2022language}, \cite{touvron2023llama}, and scaling laws and compute-optimal models such as \cite{brown2020language}, \cite{hoffmann2022training}, and \cite{wu2023bloomberggpt}, as well as the opportunities and risks of foundation models \cite{bommasani2021opportunities}. While this paper's focus is not on the technical details of these models, our theory crucially builds on the fundamental properties of AI models distilled from production-level models. 


Second, our results also contribute to a small but rapidly growing literature studying the potential social and economic impacts of generative AI models. Several papers have examined AI's disruption of labor markets, including \cite{eisfeldt2023generative} on how generative AI could affect firm values, \cite{horton2023large} viewing large language models as simulated economic agents, and \cite{brynjolfsson2023generative} looking broadly at generative AI's impact on the future of work. Other works focus on policy implications, such as \cite{korinek2022preparing} on preparing for AI's workforce effects, \cite{eloundou2023gpts} analyzing language models' labor market potential, and \cite{acemoglu2023regulating} discussing regulating transformative AI technologies. From a macro perspective, \cite{johnson2023power} explore AI's role in economic prosperity over history. Specific to the financial industry,  
\cite{grennan2020artificial} look at the implication of AI proliferation on financial analyst careers, while \cite{jha2024chatgpt} investigate ChatGPT's capability in forecasting corporate decisions. Compared to these models, our paper features a unique perspective from the information economics of financial markets. To the extent AI can be viewed as a form of advance in data technology, our results also contribute to an emerging literature on the implications of data technology advances, such as \cite{farboodi2020long}, \cite{berg2020rise}, \cite{bryzgalova2023bayesian}, \cite{abis2024changing}. 

There is also an emerging literature that applies generative AI models to the asset market itself. \cite{lopez2023can} show that 
applying 
ChatGPT and other large language models (LLMs) 
to news headlines can 
significantly predict subsequent daily stock returns. 
\cite{Chen2024ChatGPT} find that good news extracted by ChatGPT from the front pages of Wall Street Journal can predict the stock market and is related to macroeconomic conditions. 
Similarly, \cite{tan2023large} utilize large language models (LLMs) to extract contextualized representations of Chinese news text and predict stock returns in Chinese equity market. 
All these papers mainly focus on empirically evaluating the power of generative models in the equity market, while our focus is to theoretically analyze the interaction between human and AI content creators, although our analysis also has implications that can directly apply to the stock market as well. 

Finally, our results also infuse new implications into classic theories, especially the rational expectation equilibrium (REE) models for the financial markets, such as \cite{grossmanEfficiencyCompetitiveStock1976}, \cite{grossmanImpossibilityInformationallyEfficient1980}, \cite{hellwigAggregationInformationCompetitive1980}, \cite{diamondInformationAggregationNoisy1981}, 
\cite{admatiDirectIndirectSale1990}, \cite{kyleInformedSpeculationImperfect1989}, \cite{bondRealEffectsFinancial2012}, \cite{breon-drishExistenceUniquenessEquilibrium2015}, and \cite{li2023information}, etc. Our dynamic model also generalizes existing attempts at the dynamics of information models, such as \cite{wang1993model}. 
Our paper clarifies how the insights of financial economists accumulated over the past several decades will continue to survive and thrive in the era of generative AI. 

\section{Backgrounds}

To appreciate how our theory accurately captures the reality associated with generative AI models, it is imperative to adequately understand how generative AI models function. Since these models are still relatively new, and their inner workings may not be immediately familiar to non-AI researchers (such as financial economists), Section \ref{sec_how_genAI_works} first provides the necessary technical background. 

On the other hand, to appreciate the real-world implications of our analysis, it is also helpful to realize that the fear of generative AI replacing human creatives is not purely theoretical. To this end, Section \ref{sec_news} quotes recent news articles reporting human creatives' responses to such concerns as motivational examples. 

\subsection{A crash course on generative AI models} \label{sec_how_genAI_works}

In this section, we provide a high-level introduction to how generative AI models work. For exposition ease, we will use the GPT model as an example. 

Since this part of the paper is not our novel contribution (although a self-contained explanation of how generative AI models work is important for understanding the rest of the paper), to remain unbiased, we again leverage the power of ChatGPT to answer the prompt of ``how does ChatGPT work?"
For readers' ease, we will \underline{underscore} the key takeaway within the ChatGPT's answer (``completion"), which appears at the end of this completion. Other than the \underline{underscored} part crucial for the understanding of the rest of the paper, impatient readers may safely skip the rest of this subsection without loss of continuity. 

\begin{tcolorbox}[width=\textwidth,boxsep=2pt,left=0pt,right=5pt,top=5pt,bottom=5pt]
\textbf{Prompt:} 
How does ChatGPT work?
\end{tcolorbox}

\begin{quote}
\textbf{Completion:} 
\textit{Certainly! At a high level, ChatGPT (like the one you're speaking with) is based on the GPT (Generative Pre-trained Transformer) architecture, which is a type of machine learning model specifically designed for natural language understanding and generation.}

\textit{Here's a breakdown of how ChatGPT works:}

\begin{enumerate}
    \item \textit{\textbf{Architecture:} GPT is based on the Transformer architecture, which uses attention mechanisms to weight input data differently, allowing the model to focus on different parts of the input when generating an output. This ability to ``pay attention" to certain parts of the input is part of what makes GPT models especially adept at understanding context in sequences of data, like sentences or paragraphs.}

    \item \textit{\textbf{Training:} GPT models undergo a two-step training process:
    \begin{enumerate}
        \item \textbf{Pre-training:} The model is trained on vast amounts of text data (like books, websites, etc.) to predict the next word in a sentence. This helps the model learn grammar, facts about the world, reasoning abilities, and some level of commonsense knowledge. It's a form of unsupervised learning because the model isn't given explicit "answers"; instead, it learns patterns from the data.
        \item \textbf{Fine-tuning:} After pre-training, the model is further trained on more specific tasks with labeled data. For instance, to make ChatGPT, the base GPT model is fine-tuned on data that includes a chat-like format to improve its ability to have conversations.
    \end{enumerate}}
    \item \textit{\textbf{Inference:} Once the model is trained, you can provide it with a prompt (like the questions you're typing), and it will generate a response based on the patterns it has learned. It doesn't ``understand" the text in the way humans do, but it generates outputs based on the massive amount of text it was trained on.}
    \item \textit{\textbf{Parameters:} GPT models, especially the latest versions like GPT-4, have billions or even trillions of parameters. These parameters are tweaked during training to help the model represent and generate text. The sheer number of parameters is part of what allows the model to store and generate such a vast range of information.}
    \item \textit{\textbf{Limitations:} While ChatGPT can generate coherent and contextually relevant text, it doesn't truly ``understand" the content. It might produce incorrect or nonsensical answers, can be sensitive to the phrasing of the input, doesn't know specifics about events after its last training cut-off, and doesn't have personal experiences or emotions.}
\end{enumerate}

\textit{To visualize the process, \ul{imagine you're trying to teach a parrot to converse like a human. If you expose the parrot to vast amounts of human conversation, it will start to mimic phrases and structures it hears often. But even if the parrot gets really good at this mimicry, it doesn't truly understand what it's saying. GPT-4 is a very sophisticated version of this "parrot," with a vast memory of phrases and structures to draw upon.}}
\end{quote}

A key takeaway from the above \ul{underscored} paragraph is that generative AI models do not create ``new content," but rather piggyback on existing content. For example, GPT as well as other Large Language Models merely approximate the joint distribution of existing human languages and spit out the most likely ``completion" conditional on the prompt it receives. Similar logic also applies to other image-based or video-based applications. 

More formally, suppose $F$ denotes the joint distribution of all possible meaningful content that the human race may ever encounter (the population), one can reasonably view generative AI as using what humans have created in the past, or more precisely, the training corpse included during these model's training processes (the sample), to infer an approximating distribution $\hat{F}$, and then for any given prompt $X$ generating the most likely completion $Y$ by maximizing $\hat{F}(Y|X)$. All generative AI models follow this high-level framework, with differences in their intended ideal distribution $F$ (e.g., either for text or image) and the adopted algorithms for learning $\hat{F}$. 

Although our explanation has skipped most of the technical details within the GPT training algorithm, including important components such as input/position encoding, token embedding, transformer (encoder/decoder blocks and attention), etc., it will be clear that our model does not rely on the actual implementations of GPT models (that is, the specific algorithms used for learning $\hat{F}$). Therefore, our theory will apply to all existing generative AI models, all of which feature the ``parrot" property described above.  

\subsection{Anecdotal evidence} \label{sec_news}

The concern over generative AI replacing human creatives is not merely an academic theorization, but is rather an imminent reality. The following excerpt from a SeekingAlpha article on August 14th, 2023 gives a vivid example: 

\begin{quote}
    \textit{With generative AI poised to reshape the publishing industry, companies are scrambling to take positions on the latest technology. What does it mean for their operations? How will it impact their workforce? And when can they begin to expect to benefit from, or be harmed by, the new developments? These questions have rattled publishers in nearly every sector of the media industry, from online content and the printed word to movies and even music.}

    \textit{Snapshot: The knee-jerk reaction of many firms was to band together to stave off any threats to their bottom line. Hollywood actors and writers have gone on strike in part over fear that they might lose or share revenue with machines, with the walkout now \href{https://seekingalpha.com/news/4000719-writers-strike-100-days-no-end-on-the-horizon?hasComeFromMpArticle=false&lctg=60c1e9011abeb364a710717b&mailingid=32390809&messageid=wall_street_breakfast&serial=32390809.1058371&userid=3879781&utm_campaign=wsb_08_14_2023&utm_medium=email&utm_source=seeking_alpha&utm_term=wall_street_breakfast}{going on for more than 100 days}.
    Publishers like The New York Times (NYT), Wall Street Journal's News Corp. (NWS) and Dotdash Meredith owner IAC (IAC) also started talks to establish a united front against Google (GOOG) (GOOGL) and Microsoft (MSFT). The coalition explored suing Big Tech over "content scraping" and \href{https://seekingalpha.com/news/3989502-ai-google-sergey-brin-project-gemini}{training their AI models} on data produced by the organizations, but the NYT has reportedly dropped out of the group, suggesting that they might reach separate deals with tech giants that wield tremendous power over distribution and visibility. Check out recent subscriber numbers for the NYT and News Corp. in charts.}

    \textit{While corporations are debating their cash flow streams, the non-profit sector is having an easier time making decisions. In July, the Associated Press reached a two-year deal with OpenAI, the parent company of ChatGPT, by declaring the first official news-sharing agreement with an artificial intelligence firm. That direction could benefit others as they scale their offerings and revenue, such as News Corp Australia (NWS), which is using a team of four staff to produce 3,000 articles per week using generative AI. The items cover daily topics like local weather, fuel prices, and traffic conditions, and have driven a surge in subscriptions and the subsidiary's best year in a decade.}

    \textit{Outlook: Consumers are also debating what artificial intelligence means for their media consumption. Should they be informed whether a robot wrote an article, or is the creative process continuing to evolve in a similar fashion to what algo-generated tunes and CGI did to the music and movie industries? Things get even more complicated if generative AI is only used to conduct research or help write an article, or prompt topics that enhance value and quality. Trust also plays a role here, but those lines might be lost or could one day be strengthened, depending on the future of brands and their audiences.}
\end{quote} 

Such concerns over AI potentially taking over humans also go beyond journalism. For example, in the realm of business, generative AI models have also been claimed to potentially take over humans in generating business ideas. For example, in a recent experiment, \cite{girotra2023ideas} compare innovative business ideas submitted in response to the prompt asking to ``generate an idea for a new product or service appealing to college students that could be made available for \$50 or less" by both Wharton MBA students enrolled in an innovation course and GPT-4, and show that generative AI models not only produce ideas at a faster pace but also of higher quality (as tested by putting the ideas in front of a survey panel of customers in the target market via an online purchase-intent survey).%
\footnote{That said, doubts remain whether the AI-generated ideas are truly innovative, or rather than just rehashing existing ideas. For example, in a \href{https://www.reddit.com/r/singularity/comments/16fy0a7/mba_students_vs_chatgpt_who_comes_up_with_more/}{Reddit discussion} over a Wall Street Journal article "\href{https://www.wsj.com/tech/ai/mba-students-vs-chatgpt-innovation-679edf3b}{M.B.A. Students vs. AI: Who Comes Up With More Innovative Ideas?}" based on the research, commentators have pointed out that (WARNING: explicit language and typos copied verbatim, may be offensive to some readers): 
\begin{quote}
    \textit{I can't believe this has so many upvotes, although I guess maybe it's not that surprising if you don't actually read the article and the study itself. I personally, respectfully, think this is fucking dumb as shit. Here are the "top ideas" from GPT:}
    \begin{itemize}
        \item \textit{Compact Printer}
        \item \textit{Solar-Powered Charger}
        \item \textit{Mini Vacuum}
        \item \textit{Noise Cancelling headphones}
    \end{itemize}
    \textit{... Notice something? This shit already fucking exists, and is successful in the market. It's already popular. And the study design was to then take the ideas and ask people if they'd purchase it.}

    \textit{The rest are pretty much the same.}

    \textit{A first aid kit?}

    \textit{A portable lamp?}

    \textit{Are you fucking serious?}

\textit{Someone took this and wrote an article about "innovate ideas"? And how ChatGPT came up with more "innovative ideas"?}
\end{quote} 
The fact that generative AI tends to ``regenerate" existing ideas is consistent with our theory, as generative AI models are clever ``parrots" that regurgitates human-created contents.}

\section{A Reduced-form Model }


\label{section_static_model}

We first build a simple static model to illustrate our core idea. The model in this subsection is intentionally kept to be highly reduced-form to capture the essence of how the low-cost ``parrot" nature of generative AI prevents it from completely replacing human creatives. 
Appendix \ref{section_dynamic_model} sketches a more microfounded model to capture the dynamic interaction between human creatives and generative AI in a more full-fledged model. 

\paragraph{Agents}
There is a unit continuum of consumers for creative content. Among them, a fraction \( \lambda \) of agents are human creatives, who can incur a cost $C>0$ to create new content for their own consumption (for simplicity, we do not consider the possibility of hiring other human creatives for content creation --- or alternatively, we can view the current model as treating content creators and consumers as the same type of agents) while a fraction \( 1-\lambda \) of agents only use generative AI with a lower cost $c<C$ for their content needs. 
In subsequent discussions, we may simply call the second type of agent generative AI for brevity. Note that the fraction $\lambda$ will be endogenously determined in equilibrium. 

\paragraph{Preferences}
Normalizing a consumer's utility when consuming no content (either from human creation or generative AI) to zero, we denote the benefit from consuming human research as $R$. We assume that $R>C$, as otherwise, the world will trivially end up in an uninteresting as well as unrealistic situation of no content creation/consumption. We denote the payoff to an agent who conducts human content creation as $ \pi_{\text{human}} = R - C > 0$.

On the other hand, we denote the benefit from consuming generative AI-created research as $r$. Therefore, the payoff to an agent who uses generative AI $ \pi_{\text{AI}} = r(\lambda) - c$. Here, we write $r(\lambda)$ to highlight that $r$ is a function of $\lambda$, the extent to which human creatives produce content. More formally, we hold the following assumptions for the function $r$: 
\begin{assumption} \label{r_assumption}
    The benefit from consuming generative AI-created content, $r(\lambda)$, is an increasing function of $\lambda$, the extent to which human creatives produce content. 
\end{assumption}
Assumption \ref{r_assumption} naturally follows how generative AI models work as we have previously explained in Section \ref{sec_how_genAI_works}. To see this, notice that generative AIs are clever ``parrots" that learn from human-generated content. Keeping the algorithms of a generative AI model fixed, the performance of the generative AI model depends on how comprehensive and up-to-date its training sample is. To the extent that more content created by human creatives (that is, a higher $\lambda$) helps make the generative AI model's training sample more comprehensive and up-to-date, which in turn improves its performance, the benefit it brings to content consumers $r(\lambda)$ also increases with the intensity of human content creation $\lambda$. 

\begin{assumption} \label{r_assumption2}
    $r(0) = 0$ and $r(1) \approx R$. 
\end{assumption}

Assumption \ref{r_assumption2} also naturally follows how generative AI models operate as previously explained in Section \ref{sec_how_genAI_works}. 
To see this, notice that when $\lambda=0$, no content is created by humans so that the AI will have no input for its training purpose and therefore not be able to produce anything meaningful. We thus have $r(0) = 0$. 
Regarding the value of $r(1)$, we consider two cases: (1) If the AI model is perfect, in the sense that the AI is assumed to learn everything that has been created by humans so far, then we would have $r(1) = R$. Otherwise, (2) if the AI is imperfect, then $r(1) < R$. 
Since we are more interested in the case in which generative AI's capability reaches its theoretical limit,  
we shall assume that $r(1) = R - \epsilon$, where $\epsilon$ is some infinitesimal amount. Combining the two cases, we assume that $r(1) \approx R$. 

\paragraph{Equilibrium characterization}

We now proceed to solve for the equilibrium outcomes of the game, characterized by a couple of lemmas that lead to the main proposition. 

\begin{lemma} \label{thm_no_full_AI}
    Generative AI completely taking over human creatives' jobs cannot be an equilibrium outcome. In other words, $\lambda > 0$ in equilibrium.
\end{lemma}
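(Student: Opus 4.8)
The plan is to argue by contradiction. Suppose that $\lambda = 0$ constitutes an equilibrium, i.e., it is optimal for the entire unit continuum of agents to rely exclusively on generative AI. First I would invoke Assumption \ref{r_assumption2}, which gives $r(0) = 0$, so that the equilibrium payoff of an AI user evaluated at $\lambda = 0$ is
\[
\pi_{\text{AI}} = r(0) - c = -c < 0.
\]
Next I would compare this with the payoff available to an agent who instead pays the cost $C$ and creates content herself. Because a single agent has measure zero, such a unilateral deviation does not move $\lambda$ away from $0$, and the deviating agent secures $\pi_{\text{human}} = R - C$, which is strictly positive by the maintained assumption $R > C$. Hence the deviation is strictly profitable, contradicting the hypothesis that $\lambda = 0$ is an equilibrium; therefore $\lambda > 0$ in any equilibrium. (One may also observe that $\pi_{\text{AI}} = -c < 0$ means AI users are worse off than under the outside option of consuming nothing, which only reinforces the conclusion.)

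The only point requiring care — and what I would spell out explicitly — is the equilibrium notion itself: an equilibrium is a fraction $\lambda$ together with an assignment of agents to the two modes such that no positive-measure set of agents can raise its payoff by switching modes, taking $\lambda$ (hence $r(\lambda)$) as given. Under this definition the one-line comparison above is conclusive, because the relevant deviation is to the human-creation mode, whose payoff $R - C$ does not depend on $\lambda$ at all, while the AI payoff collapses to $-c$ precisely at $\lambda = 0$. I do not expect to need Assumption \ref{r_assumption} (monotonicity of $r$) for this claim; it will matter for pinning down the interior equilibrium in the following results, but the present statement follows solely from $r(0) = 0$ together with $R > C$, so there is no substantive obstacle here.

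As a refinement I would add the following remark. If one prefers to avoid the measure-zero deviation argument, one can invoke continuity of $r$ near $0$ (a mild strengthening of Assumption \ref{r_assumption}) to conclude that for all $\lambda$ in a right-neighborhood of $0$ one has $r(\lambda) - c < R - C$, so a positive mass of agents strictly prefers human creation; this not only rules out $\lambda = 0$ but shows the equilibrium $\lambda$ is bounded away from $0$. I would present the contradiction argument as the main proof and relegate this sharper statement to a remark, since the lemma as stated only asserts $\lambda > 0$.
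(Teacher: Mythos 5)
Your proposal is correct and follows essentially the same argument as the paper: assume $\lambda=0$, note the AI user's payoff is $r(0)-c=-c<0$, and observe that a unilateral deviation to human creation yields $R-C>0$, a profitable deviation. The additional remarks on the equilibrium notion and continuity are fine but not needed for the paper's version of the proof.
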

\begin{proof}
    We prove the lemma by contradiction. Suppose otherwise, then $\lambda = 0$, and each generative AI user gets $r(0) - c = -c < 0$. If one of them were to deviate toward human content creation, then his payoff would become $R - C > 0$, which is profitable. Therefore, $\lambda = 0$ cannot hold in equilibrium.
\end{proof}
Lemma \ref{thm_no_full_AI} addresses the main question posted at the beginning of the paper regarding whether generative AI will completely take over human content creation -- the answer is a resounding no. 

However, we do expect generative AI to play significant roles in equilibrium, as the next lemma indicates. 
\begin{lemma} \label{thm_no_full_human}
    It is not an equilibrium for generative AI to not be adopted, i.e., in equilibrium $\lambda < 1$. 
\end{lemma}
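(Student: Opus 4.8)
The plan is to argue by contradiction, mirroring the proof of Lemma~\ref{thm_no_full_AI} but running the profitable deviation in the opposite direction. Suppose toward a contradiction that $\lambda = 1$, so every agent is a human creative earning $\pi_{\text{human}} = R - C > 0$. I would then consider a single agent deviating from human creation to relying on generative AI. Because there is a unit continuum of agents, one agent's switch leaves the aggregate intensity of human content creation unchanged at $\lambda = 1$ (or changes it only by an infinitesimal amount, which by the monotonicity/continuity embedded in Assumption~\ref{r_assumption} does not materially affect $r$), so the deviating agent obtains $\pi_{\text{AI}} = r(1) - c$.

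Next I would invoke Assumption~\ref{r_assumption2}, under which $r(1) = R - \epsilon$ for an infinitesimal $\epsilon$, so the deviator's payoff is $R - \epsilon - c$. Comparing with the equilibrium payoff $R - C$, the deviation is strictly profitable exactly when $R - \epsilon - c > R - C$, i.e. when $C - c > \epsilon$. Since the model assumes $c < C$, the quantity $C - c$ is a strictly positive constant while $\epsilon$ is infinitesimal, so $C - c > \epsilon$. Hence the deviation strictly raises the agent's payoff, contradicting the supposed equilibrium, and therefore $\lambda < 1$ in equilibrium.

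The one point I would be most careful about is the joint handling of the measure-zero deviation and the infinitesimal $\epsilon$: I must make sure the ``$r(1)\approx R$'' approximation is used consistently, namely that generative AI's fixed-cost advantage $C - c$ strictly dominates its infinitesimal quality shortfall $\epsilon$. If one preferred a statement free of infinitesimals, the clean sufficient condition is $r(1) > R - (C - c)$, and I would record that this is precisely what the paper's standing assumptions $c < C$ and $r(1) \approx R$ deliver. I would also state explicitly, as a routine but necessary step, that a single agent is of measure zero so $\lambda$ — and hence the argument of $r$ — is the same before and after the contemplated deviation.
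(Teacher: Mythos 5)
Your proof is correct and follows essentially the same route as the paper's: assume $\lambda=1$, note that a deviating agent would earn $r(1)-c = R-\epsilon-c > R-C$, and conclude by contradiction. The only difference is that you spell out why the inequality holds (the fixed cost advantage $C-c>0$ dominates the infinitesimal $\epsilon$) and flag the measure-zero nature of the deviation, both of which the paper leaves implicit.
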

\begin{proof}
    We again prove the lemma by contraction. Suppose otherwise, then $\lambda = 1$, and each human content creator gets $R - C$. If he were to deviate toward using generative AI, his payoff would be $r(1) - c = R - \epsilon - c > R - C$. Therefore, $\lambda = 1$ cannot hold in equilibrium. 
\end{proof}

Given Lemma \ref{thm_no_full_AI} and \ref{thm_no_full_human}, we have that the equilibrium must be interior, that is, generative AI and human content creators must coexist. Therefore, in equilibrium we must have $ \pi_{\text{human}} = \pi_{\text{AI}} $, or $R-C = r(\lambda)-c$. 
\begin{proposition}
    There exists a unique $\lambda^*$ satisfying $R-C = r(\lambda^*)-c$, which characterize the unique equilibrium in our model economy
\end{proposition}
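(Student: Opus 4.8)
The plan is to combine Lemmas~\ref{thm_no_full_AI} and~\ref{thm_no_full_human} with an intermediate-value argument. First I would invoke the two lemmas to conclude that any equilibrium has $\lambda \in (0,1)$, i.e.\ is interior, so that both agent types are present; consequently the no-arbitrage (indifference) condition $\pi_{\text{human}} = \pi_{\text{AI}}$ must hold, which written out is $r(\lambda) = R - C + c$. Thus characterizing equilibria reduces to analyzing this single scalar equation on $(0,1)$.

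For existence, I would apply the intermediate value theorem to $\lambda \mapsto r(\lambda) - (R - C + c)$. At $\lambda = 0$, Assumption~\ref{r_assumption2} gives $r(0) = 0$, and since $R > C$ and $c > 0$ we have $R - C + c > 0$, so this map is strictly negative at $0$. At $\lambda = 1$, Assumption~\ref{r_assumption2} gives $r(1) = R - \epsilon$, and because $C - c > 0$ is a fixed positive number while $\epsilon$ is infinitesimal, $r(1) > R - C + c$, so the map is strictly positive at $1$. Hence some $\lambda^* \in (0,1)$ satisfies $r(\lambda^*) = R - C + c$. For uniqueness, I would use the monotonicity of $r$ from Assumption~\ref{r_assumption}: a strictly increasing function attains each value at most once, so $\lambda^*$ is the only solution.

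It then remains to check that $\lambda^*$ is genuinely an equilibrium and the only one. Because there is a continuum of agents, one agent's switch between the two technologies leaves the aggregate $\lambda$ (hence $r(\lambda^*)$) unchanged, so at $\lambda^*$ a human creative who deviates to AI obtains $r(\lambda^*) - c = R - C = \pi_{\text{human}}$, and an AI user who deviates obtains $R - C = \pi_{\text{AI}}$; no agent strictly gains from deviating, so $\lambda^*$ is an equilibrium. Conversely, every equilibrium is interior by the lemmas and therefore solves the indifference equation, so it must equal $\lambda^*$.

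The main (indeed essentially the only) delicate point is that the statement tacitly requires $r$ to be continuous for the intermediate-value step and strictly increasing for uniqueness; since Assumption~\ref{r_assumption} only says ``increasing,'' I would either read it as strict or add a short remark noting that without strictness an equilibrium still exists but the set of equilibrium fractions could be an interval. I would also make explicit the continuum convention that individual deviations do not move the aggregate $\lambda$, as this is precisely what makes the indifference condition both necessary and sufficient for an interior equilibrium.
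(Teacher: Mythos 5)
Your proof takes essentially the same route as the paper's: the paper's entire argument is that the map $\lambda \mapsto r(\lambda)-c$ is monotonically increasing, lies below $R-C$ at $\lambda=0$ and above it at $\lambda=1$, which is exactly your intermediate-value-plus-monotonicity argument applied to the indifference condition forced by Lemmas~\ref{thm_no_full_AI} and~\ref{thm_no_full_human}. Your closing remarks---that continuity of $r$ is tacitly needed for existence and strict monotonicity for uniqueness, and that the continuum convention is what makes indifference sufficient as well as necessary---are valid refinements of hypotheses the paper leaves implicit, not deviations from its approach.
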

\begin{proof}
    Notice that the right-hand side of equation $R-C = r(\lambda)-c$ (with respect to $\lambda$) is monotonically increasing, is lower than the left-hand side when $\lambda=0$, and is higher than the left-hand side when $\lambda=1$. 
\end{proof}

From the definition of $\lambda^*$, we can immediately obtain the following comparative statics: 
The equilibrium level of the fraction of human content creators, $\lambda^*$, increases in the value of human-generated content $R$, decreases in the cost of producing human-generated content $C$, decreases in the marginal cost of AI content generation $c$, and decreases in the performance of generative AI. All these comparative statistics should be intuitive.

\section{Conclusion}

In this paper, we make a simple point, that generative AI will not completely take over human content creators. The core idea shares a lot of resemblance with the active-passive tension in traditional financial markets. Our result serves to demonstrate that some fundamental ideas in financial economics could still be highly relevant as we enter an era of AI. 

There are many extensions of our core model that future research may further explore. For example, we so far assume that content consumers all create content for themselves, while in reality, they may hire other parties (e.g., professional creatives) for their content needs; Future research may follow the framework in \cite{admatiDirectIndirectSale1990} to further tease out the equilibrium outcome. We have also so far ruled out potential heterogeneity in all the content created; Future work may leverage insights from \cite{hellwigAggregationInformationCompetitive1980} or \cite{diamondInformationAggregationNoisy1981} to fully develop the case with heterogeneous contents. Our model so far has assumed only one type of content desired by the population; When the interactions among multiple content markets (e.g., general text, general image, or domain-specific contents) are considered, one may adopt ideas from \cite{admatiNoisyRationalExpectations1985}. Furthermore, to the extent that content consumers may act upon what they learn, which in turn affects the overall informational landscape, one may further explore the feedback loop between human-generated content, AI-generated content, and the real distribution of information in the physical world, leveraging insights from \cite{chen2007price, edmans2015feedback, goldstein2017information}, etc. 
Finally, we may further derive dynamic interactions between human creatives and generative AI, highlighting the fact that generative AI models are trained on lagged samples, and their values decay if no additional input are generated by human content creators. 


A key friction behind why generative AI cannot completely replace humans is that AI cannot learn all input humans can perceive. In other words, existing generative models need to rely on the first-hand content created by humans to generate second-hand information. The conclusion of our paper may change, however, if this fundamental friction is no longer present. One foreseeable way this friction may disappear is when generative AIs can directly learn from the physical world. The fact that many AI models are now increasingly multimodal or even omni-model, and that sensor technologies further improve (including human-machine connections as recent progress from Neuralink promises), then our conclusions may need to be qualified -- however, that would be a discussion of certain more advanced AI paradigms that we have not yet seen today. 

\bibliographystyle{aea}
\bibliography{main}

\end{document}